\documentclass[journal]{IEEEtran}
\usepackage{cite}
\usepackage{amsmath,amssymb,amsfonts}
\usepackage{algorithmic}
\usepackage{graphicx}
\usepackage{textcomp}
\usepackage[nopostdot,acronym,shortcuts,nonumberlist]{glossaries}
\usepackage[caption=false,font=footnotesize]{subfig}
\usepackage{algorithm}
\usepackage{bbm}
\usepackage{bm}

\def\BibTeX{{\rm B\kern-.05em{\sc i\kern-.025em b}\kern-.08em
    T\kern-.1667em\lower.7ex\hbox{E}\kern-.125emX}}

\DeclareMathOperator*{\argmax}{arg\,max}

\newtheorem{proposition}{Proposition}
\newtheorem{assumption}{Assumption}

\begin{document}

\title{Energy-Aware Two-Sided Learning for Dynamic Matching Games in Mobile Crowdsensing}

\author{Sumedh~J.~Dongare,~\IEEEmembership{Student Member,~IEEE,}
        Anja~Klein,~\IEEEmembership{Member,~IEEE,}
        and~Andrea~Ortiz,~\IEEEmembership{Member,~IEEE}%
\thanks{S. J. Dongare and A. Klein are with the Communications Engineering Lab, Technical University of Darmstadt, Landgraf-Georg-Strasse 4, 64283 Darmstadt, Germany (e-mail: \{s.dongare, a.klein\}@nt.tu-darmstadt.de).}%
\thanks{A. Ortiz is with the Institute of Telecommunications, Vienna University of Technology, Austria (e-mail: andrea.ortiz@tuwien.ac.at).}%
\thanks{Corresponding author: Sumedh J. Dongare (e-mail: s.dongare@nt.tu-darmstadt.de).}%
\thanks{This work has been funded by the German Research Foundation (DFG) as a part of the project C1 and B3 within the Collaborative Research Center (CRC) 1053 - MAKI (Nr. 210487104) and has been supported by the German Federal Ministry of Research, Technology and Space (BMFTR) project Open6GHub+ under grant 16KIS2407, by DAAD with funds from BMFTR under grant 57817830, and by the LOEWE Center emergenCITY under grant LOEWE/1/12/519/03/05.001(0016)/72.
The work of Andrea Ortiz was funded by the Vienna Science and Technology Fund WWTF under grant 10.47379/VRG23002.}}

\markboth{Preprint}%
{Dongare \MakeLowercase{\textit{et al.}}: Energy-Aware Two-Sided Learning for Dynamic Matching Games in Mobile Crowdsensing}

\maketitle

\begin{abstract}
Mobile crowdsensing (MCS) is a promising enabler of Sensing-as-a-Service (SaaS) for next generation networks (NGNs), where sensing, communication, and computing are jointly considered as on-demand services. In MCS, mobile units (MUs) collect and deliver sensing data to data requesters (DRs) via a mobile crowdsensing platform (MCSP) in exchange for monetary incentives. After sensing tasks are announced, MUs strategically select tasks to maximize their long-term utility while accounting for energy and time costs, whereas the MCSP assigns tasks to maximize its own service revenue and data quality. A fundamental challenge arises from the lack of prior knowledge of MUs’ sensing qualities and task efforts, as well as the energy limitations of battery-powered devices, which directly impacts service availability and reliability in SaaS for NGNs. To address these challenges, we formulate the interaction between MUs and the MCSP as a dynamic two-sided matching game under incomplete information, explicitly incorporating energy constraints. We propose Energy-aware Two-Sided Learning (ETSL), a fully decentralized and lightweight learning framework in which MUs locally learn task proposal strategies, while the MCSP learns the data quality of participating MUs to devise task assignment strategy. ETSL jointly enables MUs' energy-aware task proposals and MCSP's adaptive task assignment, considering their individual preferences to maximize their net revenues. Simulation results demonstrate that ETSL significantly improves MU and MCSP profits and overall energy efficiency, highlighting its effectiveness as a scalable and sustainable SaaS solution for NGN.
\end{abstract}

\begin{IEEEkeywords}
Dynamic matching games, Multi-armed Bandits, Q-Learning, Reinforcement Learning, Resource Allocation for Wireless Networks
\end{IEEEkeywords}

\newacronym{sus}{SUS}{Status Update System}
\newacronym{mcs}{MCS}{Mobile Crowdsensing}
\newacronym{mu}{MU}{mobile unit}
\newacronym{mcsp}{MCSP}{mobile crowdsensing platform}
\newacronym{dr}{DR}{data requester}
\newacronym{cpu}{CPU}{Central Processing Unit}
\newacronym{psnr}{PSNR}{Peak Signal-to-Noise Ratio}
\newacronym{aoi}{AoI}{Age Of Information}
\newacronym{iot}{IoT}{Internet of Things}
\newacronym{dt}{DT}{Digital twin}
\newcommand{\timeindex}{t}
\newcommand{\timehorizon}{T}

\newcommand{\MUindex}{k}
\newcommand{\setOfMUs}{\mathcal{K}^\mathrm{MU}}
\newcommand{\MUwithIndex}[1]{\mathrm{MU}_{#1}}
\newcommand{\MUk}{\MUwithIndex{\MUindex}}
\newcommand{\energySensitivity}{{\color{red} REMOVE ME}}
\newcommand{\effortMargin}{\beta}

\newcommand{\setOfTaskTypes}{\mathcal{Z}}
\newcommand{\numberOfTaskTypes}{Z}
\newcommand{\taskTypeIndex}{z}
\newcommand{\taskTypeZ}{\taskTypeIndex}
\newcommand{\taskIndex}{n}
\newcommand{\setOfTasks}{\mathcal{A}}
\newcommand{\taskWithIndex}[1]{{a}_{#1}}
\newcommand{\taskTypeWithIndex}[1]{\mathrm{a}_{#1}}
\newcommand{\taskn}{\taskTypeWithIndex{\taskIndex}}
\newcommand{\deadlineOfTask}{\tau_{\taskTypeIndex}^{\mathrm{max}}}
\newcommand{\taskSize}{s_\taskTypeIndex}

\newcommand{\setOfTasksWithType}[1]{\setOfTasks_{{#1},\timeindex}}
\newcommand{\setOfTasksWithTypeZ}{\setOfTasksWithType{\taskTypeIndex}}
\newcommand{\mappingFunction}{g_\timeindex}

\newcommand{\MUTaskAndTimeIndex}{_{\MUindex,\taskIndex,\timeindex}}
\newcommand{\MUTaskIndex}{_{\MUindex,\taskIndex}}

\newcommand{\elementOfAssignmentMatrixWithIndex}[1]{x_{#1}}
\newcommand{\elementOfAssignmentMatrix}{\elementOfAssignmentMatrixWithIndex{\MUTaskAndTimeIndex}}
\newcommand{\assignmentMatrix}{\mathbf{X}_{\timeindex}}

\newcommand{\paymentFunction}{P^{\mathrm{effort}}}
\newcommand{\costFunction}{c_k^{\mathrm{effort}}}

\newcommand{\sensingTime}{\tau^{\mathrm{sense}}\MUTaskAndTimeIndex}
\newcommand{\expectedSensingTime}{\bar{\tau}^{\mathrm{sense}}_{\MUindex,\taskTypeIndex}}
\newcommand{\communicationTime}{\tau^{\mathrm{comm}}\MUTaskAndTimeIndex}
\newcommand{\computationTime}{\tau^{\mathrm{comp}}\MUTaskAndTimeIndex}
\newcommand{\expectedCommunicationTime}{\bar{\tau}^{\mathrm{comm}}_{\MUindex,\taskTypeIndex}}
\newcommand{\totalEnergy}{E\MUTaskAndTimeIndex}
\newcommand{\totalHarvestedEnergy}{E^{\text{h}}_{\MUindex,\timeindex}}
\newcommand{\totalTime}{\tau\MUTaskAndTimeIndex}
\newcommand{\expectedUtilityTotalTimeWithIndex}[1]{\bar{\tau}^{\mathrm{MU}}_{#1}}
\newcommand{\expectedTotalTime}{\expectedUtilityMUWithIndex{\MUindex,\taskIndex}}

\newcommand{\txPower}{p_k^\mathrm{comm}}
\newcommand{\sensePower}{p_{k,n}^\mathrm{sense}}
\newcommand{\compPower}{p_k^\mathrm{comp}}

\newcommand{\payment}{P\MUTaskAndTimeIndex}
\newcommand{\MUpaymentProposal}{\hat{P}_{\MUindex,\taskTypeIndex}}

\newcommand{\utilityMU}{U^{\mathrm{MU}}_{\MUindex,\taskIndex,\timeindex}}
\newcommand{\expectedUtilityMUWithIndex}[1]{\hat{U}^{\mathrm{MU}}_{#1}}
\newcommand{\estimatedexpectedUtilityMUWithIndex}[1]{\Tilde{U}^{\mathrm{MU}}_{#1}}
\newcommand{\expectedUtilityMU}{\expectedUtilityMUWithIndex{\MUindex,\taskTypeIndex}}

\newcommand{\utilityTask}{U^{\mathrm{MCSP}}_{\MUindex,\taskIndex,\timeindex}}
\newcommand{\expectedUtilityTaskWithIndex}[1]{\hat{U}^{\mathrm{MCSP}}_{#1}}
\newcommand{\estimatedexpectedUtilityTaskWithIndexMCSP}[1]{\Tilde{U}^{\mathrm{MCSP}}_{#1}}
\newcommand{\expectedUtilityTask}{\expectedUtilityTaskWithIndex{\MUindex,\taskTypeIndex}}

\newcommand{\rewardTaskCompletion}{w_{\taskTypeIndex,\timeindex}}
\newcommand{\rewardPerTaskType}{w_{\taskTypeIndex}}
\newcommand{\MUpreferenceWithIndex}[1]{\succeq^{\mathrm{MU}}_{#1}}
\newcommand{\MUpreference}{\MUpreferenceWithIndex{\MUindex}}
\newcommand{\MUpreferenceTimeDependent}{\MUpreferenceWithIndex{\MUindex, \timeindex}}
\newcommand{\MUstrictPreferenceWithIndex}[1]{\succ^{\mathrm{MU}}_{#1}}
\newcommand{\TaskpreferenceWithIndex}[1]{\succeq^{\mathrm{MCSP}}_{#1}}
\newcommand{\Taskpreference}{\TaskpreferenceWithIndex{\taskTypeIndex}}

\newcommand{\socialWelfare}{U^{\mathrm{SW}}_{\timeindex}(\assignmentMatrix)}

\newcommand{\stableTaskForMUk}{\taskWithIndex{k}^{\mathrm{stable}}}
\newcommand{\stableExpectedUtilityMU}{\bar{U}^{\mathrm{MU,stable}}_{\MUindex}}

\newcommand{\estimatedUtilityWithIndex}[1]{\hat{U}_{#1}}

\newcommand{\sensingOfferWithIndex}[1]{O_{#1}^{\taskTypeZ}}
\newcommand{\sensingOffer}{\sensingOfferWithIndex{\MUindex,\timeindex}}
\newcommand{\sensingAccepted}{\bar{O}_{\MUindex,\timeindex}}

\newcommand{\freeSensingAdjustmentPara}{\epsilon^{\mathrm{a}}}
\newcommand{\freeSensingStopPara}{\epsilon^{\mathrm{e}}}
\newcommand{\freeSensingSensitivityPara}{\epsilon^{\mathrm{s}}}

\newcommand{\umax}{\Delta\mathrm{U}}

\section{Introduction}
\label{sec:Introduction}
\IEEEPARstart{N}{ext} generation wireless networks are envisioned to provide sensing-as-a-service (SaaS) as a native capability, complementing communication and computing to enable intelligent and context-aware applications \cite{6G_SaaS_2024}.
To realize this vision, scalable sensing infrastructures are needed to provide measurement updates on demand and support distributed decision-making across heterogeneous sensing, communication, and computing resources.
In this context, \gls{mcs} is a promising distributed sensing framework for SaaS, where \glspl{mu}, such as smartphones, vehicles, wearables, and other mobile IoT devices, are incentivized to collect and deliver sensing data through a \gls{mcsp} in exchange for monetary compensation~\cite{Hu2023incentives}.
Since MUs are autonomous, heterogeneous, and typically battery operated, \gls{mcs} naturally introduces challenges related to incentive-, quality-, and energy-aware resource orchestration.

An \gls{mcs} system consists of three stakeholders, namely, a \gls{dr}, a \gls{mcsp}, and \glspl{mu}.
When the \gls{dr} requires sensing data from a target area, it conveys this request to the \gls{mcsp} and offers a payment for the requested sensing service.
The \gls{mcsp} converts the request into sensing tasks, publishes them to the available \glspl{mu}, and uses part of the \gls{dr}'s payment to recruit suitable participants.
The \glspl{mu} select tasks according to their own preferences and submit task proposals, including their desired payments, to the \gls{mcsp}.
In this framework, payments act as incentive signals that connect sensing demand from \glspl{dr} with distributed sensing supply from the \glspl{mu}.
From the NGN perspective, the \gls{mcsp} can be viewed as a platform-side orchestration entity, potentially implemented at the network edge or in the cloud, that coordinates task dissemination, sensing-result collection, participant selection, and quality-aware service provisioning.

The \glspl{mu} must carefully utilize their available energy while maximizing their individual utilities.
As a result, the task proposal strategy of each \gls{mu} depends on its preferences, task execution effort, expected payment, and current battery state.
Similarly, the \gls{mcsp} performs task assignments to proposing \glspl{mu} in order to improve its service utility, revenue, and the overall quality of the sensing results.
Therefore, \gls{mcs}-based SaaS requires the joint design of participant-side energy-aware decision-making and platform-side service orchestration.

The \gls{mcs} system can be modeled as a two-sided \emph{matching game}~\cite{Gu2015matching}, since the \gls{mcsp} and the \glspl{mu} hold individual preferences over task assignments and task proposals, respectively. The solution to this game aims to find task-\gls{mu} combinations that are mutually beneficial for the \gls{mcsp} and the \glspl{mu}. However, in contrast to traditional two-sided matching games, the availability of the \glspl{mu} changes over time depending on the amount of energy available in their batteries. This temporal variation transforms the problem into a more complex \emph{dynamic} matching game~\cite{SVKadam_dynamicMatching2018}.

Solving this dynamic matching game optimally would require complete information about the \gls{mcs} system, including future task arrivals, the battery state of every \gls{mu} over time, task execution efforts, and the preferences of both the \glspl{mu} and the \gls{mcsp}. Such information is generally unavailable in practical \gls{mcs} systems. Even if complete information were available, the optimal solution would exhibit poor scalability, with complexity growing exponentially in the number of \glspl{mu}, sensing tasks, and the considered time horizon. Thus, a key challenge is to design scalable and lightweight task proposal and task assignment algorithms under incomplete information and endogenous \gls{mu} availability.

In the literature, many works assume complete information about the \gls{mcs} system and optimize task allocation for latency minimization~\cite{Subband_Fu_2024}, quality maximization~\cite{Hybrid_Liu_2024}, and coverage maximization~\cite{Hybrid_Ramachandran_2024}.
The authors in~\cite{Distributed_Wang_2024} optimize task proposal strategies by considering the preferences of the \glspl{mu}.
However, the assumption of complete information limits the applicability of these approaches in realistic and dynamic \gls{mcs} deployments.
To address uncertainty, reinforcement learning has been applied to optimize task assignment~\cite{Dongare_EHMCS_2022} and task proposal~\cite{Dongare_Globecom_2023, Decentralized_Bernd_2024} strategies separately.
Nevertheless, in practical \gls{mcs}-based SaaS systems, task proposal and task assignment are tightly coupled and should be jointly considered.

Our previous work~\cite{Dongare_TSL_2024_ICC} studied decentralized task proposal by the \glspl{mu} and learning-based task assignment by the \gls{mcsp} under incomplete information, assuming unlimited \gls{mu} energy and persistent \gls{mu} availability.
In reality, the availability of mobile sensing resources is directly affected by their battery dynamics.
This fundamentally changes the problem by introducing an intertemporal coupling between learning, energy consumption, and participant availability, making the approach in~\cite{Dongare_TSL_2024_ICC} inapplicable to realistic energy-constrained scenarios.
In particular, endogenous \gls{mu} availability requires each task proposal strategy to explicitly account for both the current battery state and the time-varying competition among \glspl{mu}.
Although previous works have contributed significantly to task proposal and task assignment in \gls{mcs}, designing an efficient and scalable solution that jointly addresses incomplete information at the \gls{mcsp} and at the \glspl{mu}, while accounting for dynamic availability, remains an open problem.

The main contributions of this work are as follows.
\begin{itemize}
    \item We propose a novel fully decentralized learning framework, termed Energy-aware Two-Sided Learning (ETSL), for dynamic two-sided matching in energy-constrained \gls{mcs}.
    ETSL consists of two components: i) an Energy-aware Task Proposal algorithm (ETP), which enables each \gls{mu} to locally learn a customized task proposal strategy, and ii) a Task Assignment algorithm, which enables the \gls{mcsp} to learn the sensing quality of participating \glspl{mu} over time and adapt its assignment decisions accordingly.
    Using the proposed ETSL, the task proposal problem of the MUs and task assignment problem of the MCSP is solved simultaneously.
    \item We provide a convergence analysis of the proposed approach using the well-known alternating-freeze approach to demonstrate that the proposed ETSL solution converges to a stable solution which maximizes the individual utilities of the MUs as well as the MCSP.
    \item Using complexity analysis, we show that ETSL is a lightweight and scalable mechanism for energy-, incentive-, and quality-aware sensing-resource orchestration in \gls{mcs}-based SaaS systems for future NGNs.
    \item Extensive simulations demonstrate that ETSL is scalable and achieves social welfare within $7.6\%$ of the optimal value iteration benchmark. Moreover, its energy consumption converges to that of the benchmark, while collisions are reduced by at least $9.1\%$ compared with the considered baselines.
\end{itemize}

\section{System model}
\label{sec:system_model}
\begin{figure}[t]
\centering
\includegraphics[width=0.45\textwidth]{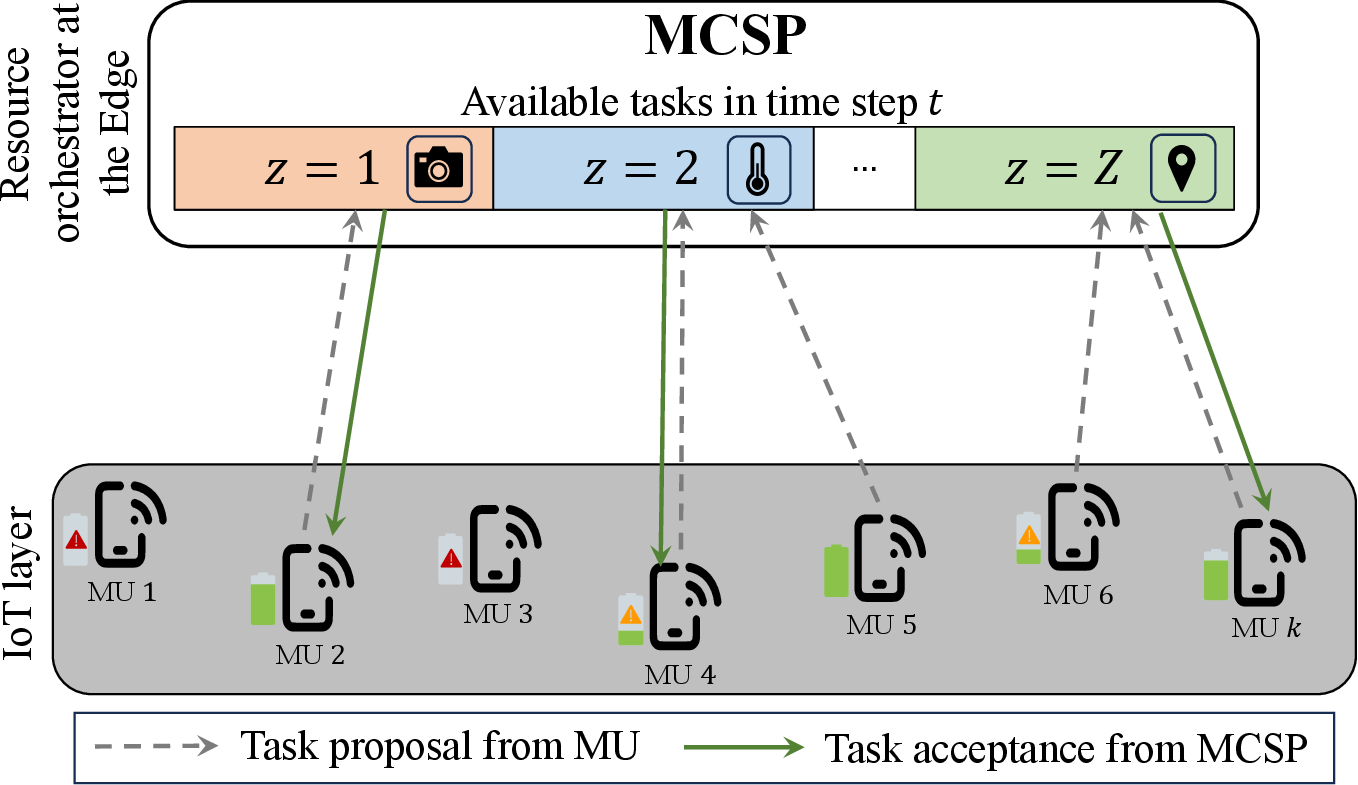}
\caption{The considered MCS system with heterogeneous MUs proposing to perform the available task types and an MCSP responding to the proposals. Here, no response from the MCSP indicates proposal rejection.}
\label{fig:systModel}
\end{figure}
We consider an \gls{mcs} system consisting of an \gls{mcsp} as a sensing resource orchestrator deployed at the edge/cloud and a set $\setOfMUs = \{\MUk\}_{k=1}^{K}$ of energy harvesting \glspl{mu}, as depicted in Fig. \ref{fig:systModel}.
The MUs are modeled within the IoT layer with integrated sensing, communication, and computing resources (ISCC).
Time is divided into discrete time steps of equal duration indexed by $t \in \{1, \ldots, T\}$.
In each time step $t$, the \gls{mcsp} publishes $N$ sensing tasks collected in a set $\setOfTasks_t=\{\taskWithIndex{n,t}\}_{n=1}^N$.
Every sensing task $\taskWithIndex{n,t}$ is categorized into $Z$ different task types denoted by $\setOfTaskTypes = \{z\}_{z=1}^Z$.
A task type $z$ may represent, for example, temperature sensing, environmental sensing, or traffic monitoring tasks.
All the tasks of the same type are collected in the set $\setOfTasksWithTypeZ\subseteq\setOfTasks_t$.
Each task type $\taskTypeZ$ is characterized by the average size $\taskSize$ of the sensing result in bits.
We assume that each published task $\taskWithIndex{n,t}$ requires only one \gls{mu} to complete it.
The \gls{mcsp} can publish multiple tasks of the same task type $\taskTypeZ$ if the \gls{dr} requires multiple sensing samples.

\subsection{Mobile Units (MUs)}
In every time step $t$, $\MUk$ decides whether to perform its preferred task of type $\taskTypeZ$ in $\setOfTasksWithType{z}$, or to remain idle to save its energy for potentially more rewarding future tasks.
If it decides to perform a task, $\MUk$ sends a task proposal $\sensingOffer$ to the \gls{mcsp} containing  the task type $\taskTypeIndex$ and the expected payment $\payment$ for its successful completion.
An \gls{mu} can submit at most one proposal per time step.
$\MUk$ calculates the desired payment $\payment$ based on its effort $J_{\MUindex,\timeindex}^{\taskTypeIndex}$ required to complete a task of type $\taskTypeZ$.
$J_{\MUindex,\timeindex}^{\taskTypeIndex}$ is measured in monetary units and depends on the time $\totalTime$ and energy $\totalEnergy$ required to complete the task as 
$J_{\MUindex,\timeindex}^{\taskTypeIndex}=\alpha\totalTime+\beta\totalEnergy$,
where $\alpha$ and $\beta$ are importance factors for time and energy efforts, respectively.
The time $\totalTime$ is given by $\totalTime = \sensingTime + \computationTime + \communicationTime$, where $\sensingTime$, $\communicationTime$ and $\computationTime$ are the sensing, computation and communication times, respectively.
$\sensingTime$ indicates the time required to sense and produce valid sensing data of size $d_z$ measured in bits.
It is drawn from a random distribution with probability density function $f^{\taskTypeZ}_{\sensingTime}$.
The expected value $\expectedSensingTime =\mathbb{E}[\sensingTime]$ of the sensing time depends on the task type $\taskTypeZ$ and the capabilities of $\MUk$.
Each \gls{mu} processes the sensing data to generate the sensing result~\cite{huang2022timedependent}.
The time $\computationTime$ required for the processing is calculated as $\computationTime= {c_z d_z}/{f^{\text{local}}_k}$, where $c_z$ is a random variable modeling the processing complexity of task type $\taskTypeZ$ and $f^{\text{local}}_k$ is the CPU frequency of $\MUk$ measured in Hz.
After processing, the final sensing result $r_{k,n,t}$, with size $\taskSize < d_z$ bits, is transmitted back to the \gls{mcsp}.
The transmission time $\communicationTime$ depends on the stochastic characteristics of the channel between $\MUk$ and \gls{mcsp}.

To perform a task, $\MUk$ spends energy $\totalEnergy$ given by $\totalEnergy = p_k^{\mathrm{sense}}\sensingTime + p_{k}^{\mathrm{comm}} \communicationTime  + \compPower \computationTime,
$ where  $\txPower$ is $\MUk$'s transmit power and $\compPower$ is the power required to process task $\taskWithIndex{n,t}$.
To perform the tasks, every $\MUk$ uses energy from its battery with capacity $B_\mathrm{max}$.
The battery status $b_{k,t}$ in time step $t$ is $b_{k,t} = \min(b_{k,t-1}-\totalEnergy+\totalHarvestedEnergy, B_\mathrm{max})$, where $\smash[b]{\totalHarvestedEnergy}$ is the amount of energy harvested by MU $k$.
The utility or profit of $\MUk$ in time step $\timeindex$ is
\begin{align}
\label{eq:utilityMU}
    \utilityMU = \payment - J_{\MUindex,\timeindex}^{\taskTypeIndex}.
\end{align}
The $\MUk$ must know the exact time and energy efforts it requires to perform task $\taskWithIndex{n,t}$ to calculate $\utilityMU$.
However, given the random nature of $\totalTime$ and $\totalEnergy$, the \glspl{mu} cannot know the exact efforts required before performing the task.
Thus, $\MUk$ estimates the expected utility
\begin{equation}
    \label{eq:expectedUtilityMU}
    \expectedUtilityMU =  \frac{1}{T}\sum_{t=1}^T\mathbb{E}[\utilityMU],
\end{equation}
where $\taskWithIndex{n,t} \in \setOfTasksWithTypeZ$.
This estimate is then used to make a task proposal decision and to select the payment $\payment$.
We define $\mathcal{I}^{\mathrm{MU}}_{k} = \{ \expectedUtilityMU | \taskTypeIndex\} $ as the \textit{MU-side} information.
Note that $\mathcal{I}^{\mathrm{MU}}_k$ is not available in advance at $\MUk$ and has to be learned over time by performing tasks.
Based on its own estimates of $\mathcal{I}^{\mathrm{MU}}_k$, every $\MUk$ sends a proposal $\sensingOffer$ to the \gls{mcsp}.

\subsection{Mobile Crowdsensing Platform}
The \gls{mcsp} receives the set of task proposals $\mathcal{O}_t$ from all \glspl{mu} for the available tasks in $\setOfTasks_t$ and decides which $\MUk$ executes which task $\taskWithIndex{n,t}$.
The task assignment decision is denoted by $x_{k,n,t}\in\{0,1\}$. $x_{k,n,t}=1$ means $\taskWithIndex{n,t}$ is assigned to $\MUk$, otherwise $x_{k,n,t}=0$.
The \gls{mcsp} can assign the task $\taskWithIndex{n,t}$ to only one of the \glspl{mu} proposing to that type of task, i.e.,
$\sum_{k}x_{k,n,t}\leq 1 \text{ } \forall n,t$.
The \gls{dr} pays the \gls{mcsp} for every executed task.
The earning $\rewardTaskCompletion$ which the \gls{mcsp} gets when $\taskWithIndex{n,t} \in \setOfTasksWithTypeZ$ is performed by $\MUk$ depends on the task type $\taskTypeZ$ and the quality factor $q_{k,n,t}\in[0,1]$ of the sensing result $r_{k,n,t}$.
$q_{k,n,t}$ is calculated using a quality function $Q_z$ as
$q_{k,n,t}=Q_z(r_{k,n,t})$.
These quality functions $Q_z$ could be, e.g., \gls{psnr} for images or accuracy of the temperature sensing.
The \gls{mcsp} and the \gls{dr} make a contractual agreement on the calculation of \gls{mcsp}'s earning 
\begin{equation}
\rewardTaskCompletion = (1 + q_{k,n,t})\rewardPerTaskType,
\end{equation}
where $w_z$ is the minimum payment in monetary units the \gls{mcsp} charges the \gls{dr} for performing a task of type $z$.
As the quality $q_{k,n,t}$ of the sensing result $r_{k,n,t}$ is unknown to the \gls{mcsp} in advance, $\rewardTaskCompletion$ is also unknown.
The utility $\utilityTask$ of the \gls{mcsp} when assigning $\MUk$ to $\taskWithIndex{n,t} \in \setOfTasksWithTypeZ$ is
\begin{align}
\label{eq:utilityMCSP}
    \utilityTask = (\rewardTaskCompletion - \payment). 
\end{align}
The \gls{mcsp} can maximize its utility $\utilityTask$ by balancing the quality of the sensing result of $\MUk$ and its payment.
However, as the \gls{mcsp} does not know the quality factor $q_{k,n,t}$ of the \gls{mu} performing task $\taskWithIndex{n,t} \in \setOfTasksWithTypeZ$, it estimates the expected utility when assigning $\MUk$ to a task type $\taskTypeZ$
\begin{equation}
    \label{eq:expectedUtilityMCPS}
    \expectedUtilityTask = \frac{1}{T}\sum_{t=1}^T\mathbb{E}[\utilityTask].
\end{equation}
We define $\mathcal{I}^{\mathrm{MCSP}}_{\taskTypeIndex} = \{ \expectedUtilityTask| k \} $ as the \textit{\gls{mcsp}-side} information about the \glspl{mu}.
$\mathcal{I}^{\mathrm{MCSP}}_{\taskTypeIndex}$ contains information about \gls{mcsp}'s income and the payments for all \glspl{mu}. $\mathcal{I}^{\mathrm{MCSP}}_\taskTypeIndex$ is not readily available at the \gls{mcsp} and is learned over time from experience gained from selecting \glspl{mu}.

The combination of \gls{mu}-side and \gls{mcsp}-side information, denoted by $\mathcal{I} = \{ \mathcal{I}^{\mathrm{MU}}_\MUindex, \mathcal{I}^{\mathrm{MCSP}}_\taskTypeIndex | \MUindex,\taskTypeIndex\} $, is the \textit{complete} information and is unknown to the \glspl{mu} and the \gls{mcsp} in advance.

\section{Dynamic Matching Game Formulation} \label{sec:problem_formulation}
The considered \gls{mcs} scenario is a two-sided market in which the \gls{mcsp} orchestrates sensing resources to execute tasks and the \glspl{mu} offer their sensing resources in exchange for a payment as incentives~\cite{shapley1971assignment}.
We assume that the \gls{mcsp} and the \glspl{mu} are rational and independent entities which take their own decisions based on their preferences.
Since their preferences influence their respective utilities, we use matching theory~\cite{Gu2015matching}, specifically dynamic matching theory, to analyze and solve the joint task proposal and task assignment problem.
Matching theory aims to obtain a \emph{stable matching}, i.e., to find task allocations where the \glspl{mu} and the \gls{mcsp} cannot improve their individual utilities by changing the assignment.
However, due to the limited battery capacities of the \glspl{mu}, their availability depends on the amount of energy available.
This means the number of available \glspl{mu} changes over time which makes the game dynamic and naturally hard to solve.
The dynamic matching problem captures the key challenge in service orchestration, where sensing services must be allocated under uncertainty, energy constraints, and competition, while maintaining long-term system efficiency.

We denote our dynamic matching game in time step $t$ as  $\mathcal{G}_t$.
The \glspl{mu}' preference ordering $\MUpreferenceTimeDependent$ ranks the task types $\taskTypeZ\in\setOfTaskTypes$ w.r.t. the achieved expected utility over the remaining time horizon, i.e., $\taskTypeZ \MUpreferenceTimeDependent \taskTypeZ'$, iff,
\begin{equation}
\label{eq:MU_preference}
\sum_{t'=t}^T \mathbb{E}\!\left[U^{\mathrm{MU}}_{\MUindex,n,t'}\!\mid\!\taskWithIndex{n,t}\!\in\!\setOfTasksWithType{z}\right]
\ge
\sum_{t'=t}^T \mathbb{E}\!\left[U^{\mathrm{MU}}_{\MUindex,m,t'}\!\mid\!\taskWithIndex{m,t}\!\in\!\setOfTasksWithType{z'}\right].
\end{equation}
This means, $\MUk$ prefers task type $\taskTypeZ$ over $\taskTypeZ'$ if the utility of performing tasks of type $\taskTypeZ$ is higher than the utility of tasks of type $\taskTypeZ'$ considering the remaining time horizon.
Similarly, the \gls{mcsp} prefers \glspl{mu} which yield the highest expected utility $\expectedUtilityTask$ for each task type $\taskTypeZ$, i.e., 
\begin{align}
    \MUwithIndex{k} \Taskpreference{}\MUwithIndex{l} \iff \expectedUtilityTaskWithIndex{k,\taskTypeIndex} \geq \expectedUtilityTaskWithIndex{l,\taskTypeIndex}.
    \label{eq:MCSP_preference}
\end{align}
As a result, for the assignment of a task of type $\taskTypeZ$, the \gls{mcsp} prefers $\MUk$ over $\MUwithIndex{l}$ if $\MUk$ provides higher utility compared to $\MUwithIndex{l}$.
This preference ranking can only be correctly determined with \gls{mcsp}-side information $\mathcal{I}^{\mathrm{MCSP}}$.

The task proposal and task assignment game $\mathcal{G}_t$ in time step $t$ is a tuple $\mathcal{G}_t = (\setOfMUs_t, \{\mathrm{MCSP}\}, \setOfTasks_t, \MUpreferenceTimeDependent, \Taskpreference)$, where $\setOfMUs_t \subseteq \setOfMUs$ is a subset of total \glspl{mu} available in time step $t$.
$\MUk \in \setOfMUs_t$ signals its willingness to participate in any task of type $\taskTypeZ$ by sending a task proposal $\sensingOffer$ to the \gls{mcsp}. 
Based on the proposals, the \gls{mcsp} performs the task assignment according to $\Taskpreference{}$.
The task assignment decisions for all \gls{mu}s and available tasks in $t$ are collected in the matrix $\assignmentMatrix$.

For two \glspl{mu}, $\MUwithIndex{k}$ and $\MUwithIndex{l}$, and two tasks, $\taskWithIndex{n,t}$ and $\taskWithIndex{m,t}$ in time step $t$.
The pair $(\MUwithIndex{k}, \taskTypeZ')$ is called a blocking pair, if both, $\MUk$ and the \gls{mcsp} can improve their utilities by deviating from the current assignment~\cite{WSaad_DynamicMatching_2018}.
The existence of the blocking pair $(\MUwithIndex{k}, \taskTypeZ')$
causes the matching 
$\assignmentMatrix$
to be unstable because $\MUwithIndex{k}$ could switch to $\taskWithIndex{m,t} \in \setOfTasksWithType{z'}$ and both, the $\MUwithIndex{k}$ and the task $\taskWithIndex{m,t}$ would obtain a more efficient matching and therefore a higher utility.
The assignment $\assignmentMatrix$ is said to be dynamically stable if no such blocking pairs exist~\cite{WSaad_DynamicMatching_2018}.
In MCS, this means that each \gls{mu} is assigned to its most preferred task while the \gls{mcsp} selects its most preferred \gls{mu} for each task.
This assignment should maximize the individual utilities of the \glspl{mu} and the \gls{mcsp}, and consequently, the social welfare.
The performance of the whole MCS system is directly affected due to \emph{collisions}, i.e., the difference between the total number of proposing MUs and the number of assigned MUs for each task type.
A dynamically stable solution does not have any collisions.

\section{Energy-aware Two-Sided Learning Algorithm}
\label{sec:algorithm}
\subsection{Overview}
To optimally solve the game $\mathcal{G}$ formulated in Sec. \ref{sec:problem_formulation}, complete information $\mathcal{I}$ is required at the \glspl{mu} as well as at the \gls{mcsp}, which is unrealistic to assume.
Considering every $\MUk$ can only learn its own MU-side information $\mathcal{I}^{\mathrm{MU}}_{k}$ and the \gls{mcsp} only learns the \gls{mcsp}-side information $\mathcal{I}^{\mathrm{MCSP}}_{\taskTypeZ}$, in this section, we present the Energy-aware Two-Sided Learning (ETSL) algorithm which aims to maximize the \glspl{mu}' and \gls{mcsp}'s utilities using this local information.
ETSL consists, i) an Energy-aware Task Proposal (ETP) algorithm, run locally by every \gls{mu}, and ii) Two-Sided Learning: Task Assignment (TSLTA), run by the \gls{mcsp}.

ETP is a Q-learning based solution implemented at every $\MUk$ to find an efficient task proposal strategy that maximizes the total achieved utility over the time horizon.
Each $\MUk$ considers the state $S_{k,t}=\langle b_{k,t} \rangle$, i.e., its battery and the action $A_{k,t}=\setOfTaskTypes \cup \{-1\}$, where $-1$ is the idle action.
The state space in this work is deliberately minimal, comprising of only the battery state of the respective \gls{mu}. As we assume a quasi-static channel model and the task arrivals are i.i.d., the battery level constitutes a sufficient statistic for the task proposal decision, in the sense that no additional observable variable alters the transition probabilities or the expected reward. A richer state representation would therefore increase the dimensionality of the problem without improving the quality of the resulting policy.

Every $\MUk$ receives instantaneous utility $R_{k,t}=\utilityMU$ as a reward.
Using ETP, each $\MUk$ independently learns the effort $J_{k,t}^\taskTypeZ$ required to perform task $\taskWithIndex{n,t}$ of type $\taskTypeZ$ by proposing and performing tasks of different types.
This is crucial in a scenario where the \glspl{mu} have limited energy. Moreover, the task proposal decision at time step $t$ not only impacts their performance, but also their ability to propose in the next time steps.
The estimation of the expected efforts $\hat{J}_{k,t}^\taskTypeZ$ for $\MUk$ is only possible when it gets assigned to tasks of type $\taskTypeZ$.

\begin{figure}[t]
\begin{algorithm}[H]
\caption{Energy-aware Task Proposal (ETP) algorithm}
\label{alg:proposed-algorithm}
\scriptsize
    \begin{algorithmic}[1]
        \STATE Initialize $Q(S_{k,t},A_{k,t}) = 0$, for all states $S_{k,t} \in \mathcal{S}_k$ and actions $A_{k,t} \in \mathcal{A}_k$, $\estimatedexpectedUtilityMUWithIndex{k,n,t}, \hat{J}^{z}_{k,t}\quad \forall k \in \mathcal{K}, z \in \mathcal{Z}, t \in \{1,\ldots,T\}$ and $\epsilon_1=\epsilon_\mathrm{max}$.
        \FOR{$t = 1, \ldots, T$}
            \STATE Observe current available tasks and current state $S_{k,t}=\langle b_{k,t}\rangle $.
            \STATE Obtain a feasible set $\mathcal{A}^\mathrm{f}_{k,t}$.
            \IF{$\eta < \epsilon_t$ where $\eta\sim\mathcal{U}[0,1]$}
                \STATE Randomly select a task type $z\in\mathcal{Z}$ from $\mathcal{A}^\mathrm{f}_{k,t}$. \algorithmiccomment{Exploration}
            \ELSE
                \STATE Select $A_{k,t}=\argmax_{A_{k,t}\in \mathcal{A}^\mathrm{f}_{k,t}} Q(S_{k,t}, A_{k,t})$. \algorithmiccomment{Exploitation}
            \ENDIF
            \STATE If $A_{k,t}=-1$, $\MUk$ stays idle.
            \STATE If $A_{k,t}=z\in\mathcal{Z}$, select payment $\hat{P}_{k,z} \leftarrow P_{\text{effort}}(\hat{J}^{z}_{k,t})$.
            \STATE Send task proposal $\sensingOffer$.
            \STATE Wait for the MCSP's decision $x_{k,n,t}$ from TSLTA \cite{Dongare_TSL_2024_ICC}.
            \IF{$\elementOfAssignmentMatrixWithIndex{k,n,t}=1$}
                \STATE Perform the task $\taskWithIndex{n,t}$ and transmit the result $r_{k,n,t}$ to \gls{mcsp}.
                \STATE Receive payment $\MUpaymentProposal$ and observe $\utilityMU$.
                \STATE Update estimates $\estimatedexpectedUtilityMUWithIndex{k,n,t}$ and $\hat{J}_{k,t}^\taskTypeIndex$ based on $\utilityMU$ and ${J}_{k,t}^\taskTypeIndex$.
                \STATE Update the $Q(S_{k,t}, A_{k,t})$. \algorithmiccomment{Q-learning update rule}
            \ELSE
                \STATE $\estimatedexpectedUtilityMUWithIndex{k,n,t} \xleftarrow[]{}  \estimatedexpectedUtilityMUWithIndex{k,n,t-1}$, $\hat{J}_{k,t}^\taskTypeIndex \xleftarrow[]{}  \hat{J}_{k,t-1}^\taskTypeIndex$.
            \ENDIF
            \STATE Update $\epsilon_{t+1} = \min\{\epsilon_{\mathrm{min}}, \epsilon_t*\theta\}$
        \ENDFOR
    \end{algorithmic}
\end{algorithm}
\vspace{-4mm}
\end{figure}
ETP is summarized in Alg. \ref{alg:proposed-algorithm}.
Every $\MUk$ starts by initializing all Q-values $Q(S_{k,t}, A_{k,t})$ to zero for $t=1$, for all states $S_{k,t}$, and actions $A_{k,t}$.
For every task type $z$, each $\MUk$ maintains an initial estimate of the task effort $\hat{J}_{k,0}^z$ (line 1).
In every time step $t$, $\MUk$ observes the current state $b_{k,t}$ (line 3).
Due to the energy constraints, $\MUk$ collects all the feasible tasks in time step $t$ into the set $\mathcal{A}_{k,t}^\mathrm{f}$, i.e., if the expected energy $\Bar{E}_{k,n,t}$ required to perform the task $\taskWithIndex{n,t}\in\setOfTasksWithTypeZ$ is smaller than the battery state $b_{k,t}$, then task type $\taskTypeZ$ is added to the set $\mathcal{A}_{k,t}^\mathrm{f}$ (line 4).
To balance exploration and exploitation of the Q-learning algorithm, $\MUk$ follows a decaying $\epsilon$-greedy policy.
With probability $\epsilon_t$, $\MUk$ explores by randomly selecting a task type $\taskTypeZ$ from the feasible set $\mathcal{A}_{k,t}^\mathrm{f}$ (line 6).
With a probability $1-\epsilon$, it exploits its current knowledge by selecting the task of type $\taskTypeZ$ from the feasible set $\mathcal{A}_{k,t}^\mathrm{f}$ that maximizes the estimated expected utility $\estimatedexpectedUtilityMUWithIndex{k,n,t}$, based on the Q-values learned from previous time steps (line 8).
If idle action $A_{k,t}=-1$ is selected, the $\MUk$ remains idle in time step $t$.
If $A_{k,t}=z\in\mathcal{A}^\mathrm{f}_{k,t}$ is selected, $\MUk$ calculates the payment $\hat{P}_{k,z}$ based on its estimated task effort $\hat{J}_{k,t-1}^z$ (line 11).
$\MUk$ then sends a task proposal $\sensingOffer$, including the selected task type and payment, to the \gls{mcsp} (line 12).
The \gls{mcsp} collects all incoming proposals and decides which \glspl{mu} are assigned to the available tasks (line 13). 
If $\MUk$ is assigned to the task $\taskWithIndex{n,t}$, i.e. $x_{k,n,t}= 1$, it proceeds to execute the task (line 14).
If the task is successfully completed, $\MUk$ receives the payment $\hat{P}_{k,z}$ and observes the utility $\utilityMU$ as well as the exact task effort $J_{k,t}^z$ (line 15).
$\utilityMU$ is used to update the estimated expected utility $\estimatedexpectedUtilityMUWithIndex{k,n,t}$ as
$\estimatedexpectedUtilityMUWithIndex{k,n,t} = \estimatedexpectedUtilityMUWithIndex{k,n,t-1}+\frac{(\utilityMU-\estimatedexpectedUtilityMUWithIndex{k,n,t-1})}{N_k^\taskTypeZ}$, where $N_k^\taskTypeZ$ represents the number of times $\MUk$ was assigned to a task of type $\taskTypeZ$ (line 17).
The effort estimate $\hat{J}_{k,t}^\taskTypeZ$ is updated similarly (line 17).
The achieved utility $\utilityMU$ is used as the reward to update the Q-value for the selected state-action pair using the standard Q-learning update rule (line 18).
If $\MUk$ is rejected by the \gls{mcsp}, the estimated task efforts and utility remain unchanged (line 20).
By learning the task efforts for the available task types and acceptance probabilities of the \gls{mcsp}, every \gls{mu} refines its task proposal strategy, improving its ability to balance energy constraints with maximizing utility.
ETP uses the idle action smartly such that the \gls{mu} can save its energy if it does not find a feasible task to propose.

At the \gls{mcsp}, the task assignment decisions $x_{k,n,t}$ are made in every time step $\timeindex$ based on the proposals $\sensingOffer$ received from the \glspl{mu}.
The MCSP cannot observe MU battery dynamics and therefore cannot anticipate the long-term impact of its task assignment decisions on MU availability.
Under this information structure, any MCSP-side learning strategy is inherently myopic, making TSLTA from~\cite{Dongare_TSL_2024_ICC} an appropriate solution for instantaneous utility maximization.

In TSLTA, the \gls{mcsp} initializes the $\estimatedexpectedUtilityTaskWithIndexMCSP{k,n,t}$ for all \glspl{mu} and all task types.
The set $\setOfMUs_t$ of MUs is the action space for TSLTA.
The \gls{mcsp} may assign the task to a random \gls{mu} proposing for the task $\taskWithIndex{n,t}\in\setOfTasksWithType{z}$ to improve its estimate of $\estimatedexpectedUtilityTaskWithIndexMCSP{k,n,t}$; or, it may exploit the available knowledge to assign the task to \glspl{mu} which maximize its estimated expected utility $\estimatedexpectedUtilityTaskWithIndexMCSP{k,n,t}$.
To balance the exploration-exploitation, the \gls{mcsp} uses $\epsilon$-greedy action selection.
The assignment decisions are sent back individually to each proposing $\MUk$.
After this, the assigned \glspl{mu} perform the task and transmit the sensing result $r_{k,n,t}$ back to the \gls{mcsp}.
Using $r_{k,n,t}$, the \gls{mcsp} evaluates the quality of the result $q_{k,n,t}$ and observes the $\utilityTask$.
Then the \gls{mcsp} updates $\estimatedexpectedUtilityTaskWithIndexMCSP{k,\taskTypeZ,t}= \estimatedexpectedUtilityTaskWithIndexMCSP{k,\taskTypeZ,t-1} + \frac{(\utilityTask-\estimatedexpectedUtilityTaskWithIndexMCSP{k,n,t-1})}{N_k^z}$, where $N_k^z$ is the number of times $\MUk$ is assigned to task type $\taskTypeZ$.
With the help of ETP and TSLTA, we jointly optimize the \glspl{mu}' task proposals to maximize their utilities as well as \gls{mcsp}'s task assignments to maximize the \gls{mcsp}'s utility.

\subsection{Convergence of ETSL}

Owing to the coupled, non-stationary nature of the two-sided learning problem, we characterize convergence by conditioning
on stationarity of one side at a time, following the standard alternating-freeze approach used in multi-agent reinforcement learning analysis~\cite{watkins1992q,jaakkola1994convergence}.

\begin{assumption}[Stationary MCSP policy]
\label{as:mcsp_stationary}
The MCSP's assignment rule has converged to the greedy form
$k^*(z,t) = \arg\max_{k \in \mathcal{P}_{z,t}} \tilde{U}^{\mathrm{MCSP}}_{k,z,t-1}$,
held fixed for $t \geq t_0$.
\end{assumption}

\begin{proposition}[MU-side convergence]
\label{prop:etp_convergence}
Under Assumption~\ref{as:mcsp_stationary}, the decision problem faced by $\mathrm{MU}_k$ is a stationary MDP $\mathcal{M}_k = (\mathcal{S}_k, \mathcal{A}_k, P_k, R_k)$ with $S_{k,t} = \langle b_{k,t} \rangle$. If (i) rewards $R_{k,t}$ are bounded, (ii) the learning-rate schedule satisfies the Robbins-Monro conditions, and (iii) the exploration policy is
greedy in the limit with infinite exploration (GLIE), then $Q(S_{k,t}, A_{k,t}) \to Q^*(S_{k,t}, A_{k,t})$ with
probability~1.
\end{proposition}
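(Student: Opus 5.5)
The plan is to reduce the statement to the classical convergence theorem for tabular Q-learning (Watkins--Dayan~\cite{watkins1992q}, in the stochastic-approximation form of Jaakkola et al.~\cite{jaakkola1994convergence}). First I would show that, under Assumption~\ref{as:mcsp_stationary}, $\mathrm{MU}_k$ faces a stationary MDP. Then I would check the hypotheses of that theorem one by one.

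\emph{Step 1 (stationarity of the environment seen by $\mathrm{MU}_k$).} For $t\ge t_0$ the MCSP applies the fixed map $k^*(z,t)=\arg\max_{k\in\mathcal{P}_{z,t}}\tilde U^{\mathrm{MCSP}}_{k,z,t-1}$. Frozen estimates give a fixed ranking over MUs for each type $z$.

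Given the state $b_{k,t}$ and action $A_{k,t}$, the event $x_{k,n,t}=1$ depends only on three things:
\begin{itemize}
\item this fixed ranking;
\item the i.i.d. task arrivals $\mathcal{A}_t$;
\item the set $\mathcal{P}_{z,t}$ of competing proposers.
\end{itemize}
I will treat the other MUs' behaviour as part of the environment, distributed according to a time-invariant law. This is exactly what the alternating-freeze argument conditions on; when the other MUs are also frozen at their policies, their battery processes are time-homogeneous Markov chains.

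With $\tau$, $E$ and $E^{\mathrm h}$ drawn from time-invariant distributions, the battery recursion $b_{k,t+1}=\min(b_{k,t}-x_{k,n,t}E_{k,n,t}+E^{\mathrm h}_{k,t},B_{\max})$ defines a transition kernel $P_k(b'\mid b,a)$ independent of $t$. The reward $R_{k,t}=x_{k,n,t}(P_{k,n,t}-J^z_{k,t})$ has a time-invariant conditional law given $(b,a)$. The feasible set $\mathcal{A}^{\mathrm f}_{k,t}$ is a function of $b$ alone. I would handle the payment, which is built from the running estimate $\hat J^z_{k,t}$, by noting that $\hat J^z_{k,t}$ converges almost surely to $\bar J^z_k$ by the strong law of large numbers. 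Its effect is then an asymptotically vanishing perturbation, which the stochastic-approximation framework tolerates. Battery levels must be discretized so that $\mathcal{S}_k$ is finite, and $\mathcal{A}_k=\mathcal{Z}\cup\{-1\}$ is finite.

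\emph{Step 2 (verifying the hypotheses).} I would write the update as $Q_{t+1}(s,a)=(1-\alpha_t)Q_t(s,a)+\alpha_t[R+\gamma\max_{a'}Q_t(s',a')]$. The Bellman operator is a $\gamma$-contraction in sup-norm, and the noise term is a martingale difference. Boundedness of rewards, hypothesis (i), gives bounded conditional variance. Hypothesis (ii) gives $\sum\alpha_t=\infty$ and $\sum\alpha_t^2<\infty$ for each pair $(s,a)$.

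Hypothesis (iii), GLIE, together with the structure of the battery chain ensures every feasible pair is visited infinitely often. The idle action always remains feasible and harvesting recharges the battery, so every battery level is recurrent. A rejected proposal produces the transition $b\to\min(b+E^{\mathrm h},B_{\max})$ with reward $0$; this is just part of $P_k$, and I would note that the Q-update must also be applied in that case. With these ingredients, Jaakkola et al.'s theorem yields $Q\to Q^*$ with probability~1.

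\emph{Main obstacle.} The hardest step is Step~1. There are two sources of non-stationarity: the other MUs' policies, and the payment depending on the learner's own evolving estimate $\hat J$. I would first argue that, within the alternating-freeze scheme, the other MUs are frozen as well, so $\mathcal{P}_{z,t}$ is a stationary process. I would then absorb the $\hat J$ drift as an $o(1)$ bias term satisfying the conditions of the asynchronous stochastic-approximation theorem.
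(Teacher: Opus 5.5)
Your route is the paper's: freeze the MCSP, argue that $\mathrm{MU}_k$ then faces a stationary MDP on its battery level, and invoke the Watkins--Dayan/Jaakkola et al.\ theorem. \textbf{The gap is the step from ``the exogenous inputs have time-invariant laws'' to ``there is a kernel $P_k(b'\mid b,a)$ and the noise term is a martingale difference.''} By exogenous inputs I mean $\tau$, $E$, $E^{\mathrm h}$ and $\mathcal{P}_{z,t}$.

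Q-learning on $\mathcal{S}_k$ needs those inputs at time $t$ to be conditionally independent of $\mathrm{MU}_k$'s past given $(b_{k,t},A_{k,t})$, i.e.\ $b_{k,t}$ must be a Markov state. Stationarity of their marginal laws does not give this, and in the paper's setting it fails in three ways:
\begin{itemize}
\item Harvesting follows a two-state Markov chain with persistence $0.6$. Whether the battery rose at $t-1$ is in the history but not in $b_{k,t}$, and it changes the law of $E^{\mathrm h}_{k,t}$.
\item Once the other MUs are frozen, $\mathcal{P}_{z,t}$ is driven by their autocorrelated battery and harvesting states. A rejection at $t-1$ therefore predicts a strong competitor at $t$. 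Moreover, $\mathrm{MU}_k$'s own wins feed back into competitors' batteries, since rejected MUs spend no energy.
\item Quantizing a continuous battery into $L$ levels creates the same aggregation problem, unless energy increments lie on the grid.
\end{itemize}
So $\mathrm{MU}_k$ actually faces a POMDP whose hidden state includes its harvesting state and $b_{-k,t}$. The conditional mean of the target $R_{k,t}+\gamma\max_{a'}Q(b_{k,t+1},a')$ given the history is not a fixed function of $(b_{k,t},A_{k,t})$. Hence there is no Bellman operator on $\mathcal{S}_k$ for which your noise term is a martingale difference, and the stochastic-approximation theorem does not apply.

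To close this you need an explicit extra hypothesis. One option is that harvest increments, effort draws and competitor proposal sets are i.i.d.\ over time and independent of $\mathrm{MU}_k$'s history, with energies on the quantization grid. Another is an enlarged state (e.g.\ including the harvesting-chain state) combined with such an idealization for the competition.

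The paper's proof makes exactly the same leap: it calls the acceptance probability and $E^{\mathrm h}_{k,t}$ ``stationary'' and concludes a stationary kernel. So you inherit this gap rather than introduce it. Elsewhere your plan is more careful than the paper's:
\begin{itemize}
\item You correctly see that Assumption~\ref{as:mcsp_stationary} alone does not make the competition stationary, so the other MUs must be frozen as well. That is an extra hypothesis the paper also tacitly needs.
\item Treating the $\hat J$-dependent payment as a vanishing reward bias (the lemma version with a $c_n\to 0$ term) is right. Note, though, that $\hat J^z_{k,t}\to\bar J^z_k$ only if $N_k^z\to\infty$; otherwise it freezes at a random value and the limiting MDP is random.
\item You rightly flag that Alg.~\ref{alg:proposed-algorithm} updates $Q$ only on acceptance, so idle and rejected transitions must also be updated for the theorem to apply.
\end{itemize}

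One overstatement remains: ``every battery level is recurrent'' does not follow from harvesting, which only moves the battery up. Reaching low levels requires accepted proposals, which the frozen ranking need not grant with positive probability. Either assume this, or state convergence only for state--action pairs visited infinitely often.
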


\begin{IEEEproof}
Under Assumption~\ref{as:mcsp_stationary}, the acceptance probability of $\mathrm{MU}_k$'s proposals depends only on the competition from other MUs, so the transition kernel $P_k(S_{k,t+1} \mid S_{k,t}, A_{k,t})$ is stationary.
The battery evolution depends on $A_{k,t}$ only through the now stationary acceptance probability, and $E_{k,n,t}$, $E^h_{k,t}$ are themselves stationary.
The result follows directly from the classical Q-learning convergence theorem~\cite{watkins1992q,jaakkola1994convergence}.
\end{IEEEproof}

\begin{assumption}[Approximately stationary aggregate proposal process]
\label{as:mu_stationary}
For each task type $z$, let $M_{z,t} := |\mathcal{P}_{z,t}| = \sum_{k=1}^{K} \mathbbm{1}\{A_{k,t}=z\}$ denote the aggregate number of proposals received at time $t$.
We assume that for $t \geq t_0$, the distribution of $M_{z,t}$ is approximately stationary, even though individual MUs' proposal decisions $A_{k,t}$ may remain non-stationary due to ongoing battery-state fluctuations $b_{k,t}$.
\end{assumption}

\textit{Justification:} Battery dynamics evolve independently across MUs, driven by independent energy-harvesting processes (see Sec.~\ref{sec:numerical_evaluation}).
Even after $Q(S_{k,t}, A_{k,t})$ has converged (Proposition~\ref{prop:etp_convergence}), individual proposal sequences $\{A_{k,t}\}_t$ remain stochastic due to slow mixing of $b_{k,t}$.
However, since $M_{z,t}$ aggregates $K$ approximately independent indicator variables, a mean-field argument gives
\begin{equation}
\frac{M_{z,t}}{K} \xrightarrow{\text{a.s.}} \bar{\rho}_z
\quad \text{as } K \to \infty,
\end{equation}
with fluctuations of order $O(1/\sqrt{K})$ by the law of large numbers.
For large number of MUs used in the evaluation, this concentration is substantial, supporting the treatment of the competition landscape $M_{z,t}$, which determines the proposal pool $\mathcal{P}_{z,t}$ from which TSLTA selects as approximately stationary for $t \geq t_0$, independent of MU-specific non-stationarity.

\begin{proposition}[MCSP-side convergence]
\label{prop:tslta_convergence}
Under Assumption~\ref{as:mu_stationary}, TSLTA reduces, independently for each task type $z$, to a stochastic multi-armed bandit problem over arms $k \in \mathcal{P}_{z,t}$ with stationary mean reward $U^{\mathrm{MCSP}}_{k,z}$.
Given bounded rewards and $N^z_k \to \infty$ a.s.\ for all $k$, the estimate $\tilde{U}^{\mathrm{MCSP}}_{k,z,t} \to U^{\mathrm{MCSP}}_{k,z}$ a.s., and the induced greedy policy converges to $k^*(z) = \argmax_k U^{\mathrm{MCSP}}_{k,z}$.
\end{proposition}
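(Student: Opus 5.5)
We take the MU side as frozen in the sense of Assumption~\ref{as:mu_stationary} and argue separately for each task type $z$. Since TSLTA keeps a separate estimate $\tilde{U}^{\mathrm{MCSP}}_{k,z,t}$ for every pair $(k,z)$, and tasks of different types never compete for the same MU within the assignment rule, the problem decouples across $z$.

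\textbf{Step 1 (bandit reduction).} For fixed $z$, the arms are the MUs $k$ that appear in $\mathcal{P}_{z,t}$. We first show that the reward observed when arm $k$ is pulled is drawn from a stationary distribution. That reward is $U^{\mathrm{MCSP}}_{k,n,t}=(1+q_{k,n,t})w_z-P_{k,n,t}$. The quality $q_{k,n,t}$ comes from $Q_z$ applied to the result of MU $k$, which depends only on $k$ and $z$. The payment $P_{k,n,t}=P_{\text{effort}}(\hat{J}^z_{k,t})$ is stationary for $t\ge t_0$ because, under Assumption~\ref{as:mu_stationary} together with Proposition~\ref{prop:etp_convergence}, the MU-side effort estimates have settled. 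Assumption~\ref{as:mu_stationary} also makes the composition of the proposal pool, which is the set of available arms, approximately stationary. We therefore obtain a stochastic multi-armed bandit with sleeping or available-arm structure and stationary means $U^{\mathrm{MCSP}}_{k,z}$. The rewards are bounded because $q\in[0,1]$ and payments are bounded.

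\textbf{Step 2 (consistency of estimates).} The TSLTA update with step $1/N_k^z$ is exactly the running sample mean of the rewards collected on the pulls of arm $k$. Conditional on the arm being pulled, these samples are i.i.d. after $t_0$; before $t_0$ there are only finitely many of them, so they do not affect the limit. The strong law of large numbers applies along the subsequence of pull times, which is a sequence of stopping times, so the sampled rewards remain i.i.d. by the optional-skipping theorem. Together with the hypothesis $N_k^z\to\infty$ a.s., this gives $\tilde{U}^{\mathrm{MCSP}}_{k,z,t}\to U^{\mathrm{MCSP}}_{k,z}$ a.s.

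\textbf{Step 3 (greedy policy convergence).} Assume the maximizer $k^*(z)$ is unique, or otherwise break ties by a fixed rule, and let $\delta>0$ be the gap between the best and second-best means. After an a.s. finite random time, every estimate lies within $\delta/2$ of its mean. From then on the greedy choice among proposers is the true argmax whenever $k^*(z)$ proposes. We combine this with $\epsilon$-greedy exploration decaying to zero, which is what ensures $N_k^z\to\infty$ in the first place.

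\textbf{Main obstacle.} The delicate point is Step 1: the availability of arms is endogenous. Whether $k$ is in $\mathcal{P}_{z,t}$ depends on battery states and on past assignments. The i.i.d. property therefore holds only conditionally on a pull, and $N_k^z\to\infty$ requires that MU $k$ proposes type $z$ infinitely often. We will handle this by taking $N_k^z\to\infty$ as a hypothesis, as the statement does, and by arguing that the reward distribution given a pull does not depend on the pull history. This independence follows from the stationarity of $q$ and of the payment once Proposition~\ref{prop:etp_convergence} holds.
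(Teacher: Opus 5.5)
Your route is the paper's route spelled out in more detail: decouple by task type (valid since each MU submits at most one proposal per step), read the $1/N^z_k$ update as a running sample mean, apply the strong law, and finish with a gap argument for the greedy choice. Your restriction to time steps where $k^*(z)$ actually proposes is more careful than the paper's statement. The step that does not hold as written is the justification of stationary rewards in Step~1, which Step~2 then uses to claim the per-arm samples are i.i.d.\ after $t_0$.

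Proposition~\ref{prop:etp_convergence} holds only under Assumption~\ref{as:mcsp_stationary}. That assumption freezes the MCSP side, which is the very side whose convergence Proposition~\ref{prop:tslta_convergence} is meant to establish. Invoking it here freezes both sides at once and defeats the alternating-freeze structure. Even granting it, Proposition~\ref{prop:etp_convergence} concerns the Q-values, not the effort estimates $\hat J^z_{k,t}$ that set the payment $P_{\text{effort}}(\hat J^z_{k,t})$ in Alg.~\ref{alg:proposed-algorithm}. Those estimates are re-averaged after every assignment. So the payment shifts each time arm $k$ is pulled, and the rewards $(1+q_{k,n,t})w_z-P_{k,n,t}$ are not identically distributed after any finite $t_0$. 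Assumption~\ref{as:mu_stationary} does not help either, since it only constrains the aggregate count $M_{z,t}$.

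The repair is short and needs neither stationarity nor Proposition~\ref{prop:etp_convergence}. Because $N^z_k$ counts assignments of MU $k$ to type $z$, your hypothesis $N^z_k\to\infty$ also makes $\hat J^z_{k,t}$ a running mean of infinitely many i.i.d.\ realized efforts, so it converges a.s. For continuous $P_{\text{effort}}$, the payments then converge a.s.\ to some limit $\bar P_{k,z}$.

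Now split the MCSP's running mean into two pieces. The first is $w_z$ times the running mean of $1+q$, to which your optional-skipping SLLN applies. The second is minus the running mean of the payments, which converges to $\bar P_{k,z}$ by Ces\`aro. Hence $\tilde U^{\mathrm{MCSP}}_{k,z,t}\to(1+\bar q_{k,z})w_z-\bar P_{k,z}$, which is the sensible reading of the stationary mean $U^{\mathrm{MCSP}}_{k,z}$, and Step~3 goes through unchanged. The paper's own two-line proof simply treats per-arm reward stationarity as part of the MU-side freeze and never mentions payments, so this point is worth getting right.

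One smaller correction: with ties, a fixed tie-breaking rule does not stabilize the greedy choice, because the tied estimates keep fluctuating around their common mean. You only get that the greedy choice eventually lies in the argmax set.
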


\begin{IEEEproof}
The assignment problem decouples across task types $z$. Within a fixed $z$, stationarity of $\rho_{k,z}$ renders the proposal arrival process stationary, and the incremental update in~(5) is exactly the sample-mean estimator, whose convergence follows from the strong law of large numbers and standard $\epsilon$-greedy bandit results~\cite{sutton2018reinforcement}.
\end{IEEEproof}

\textit{Remark:} Propositions~\ref{prop:etp_convergence} and~\ref{prop:tslta_convergence} characterize the fixed point each side converges to under stationarity of the other, because both sides learn simultaneously in practice, this is only approximately satisfied at finite $t$. The two updates operate on different effective timescales ($\tilde{U}^{\mathrm{MCSP}}_{k,z,t}$ updates once per assignment, while $Q(S_{k,t}, A_{k,t})$ updates once per MU decision), which is consistent with the two-timescale stochastic approximation framework of~\cite{borkar1997stochastic}; we leave a full joint convergence proof to future work and instead provide extensive empirical evidence of stability in Sec.~V.

\subsection{Complexity and overhead of ETSL}
ETP has linear complexity in the number of feasible actions, i.e., 
$O(|\mathcal{A}^\mathrm{f}_{k,t}|)$ for each \gls{mu} and its current battery state, with worst-case complexity of $O(Z)$.
Note that the Q-table grows with linear space complexity $O(LZ)$ for a fixed number of battery levels $L$.
Likewise, TSLTA has linear complexity in the number of proposals, i.e., $O(|\mathcal{O}_t|)$, with worst-case complexity of $O(K)$ and space complexity of $O(KZ)$.
Note that for both, the \gls{mcsp} and the \glspl{mu}, the communication overhead required for matching is low.
The \glspl{mu} send task offers to the \gls{mcsp} which contains only the task type and the payment information.
The task acceptance and the task rejection messages which the \gls{mcsp} transmits back to the respective \glspl{mu} are also very short.

\section{Numerical Evaluation}
\label{sec:numerical_evaluation}
\begin{figure*}[!t]
    \centering
    \includegraphics[width=0.75\textwidth]{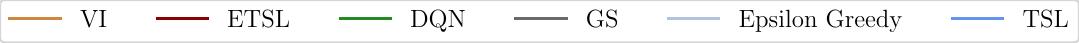}\\[-2mm]

    \subfloat[System performance analysis\label{fig:socialWelfare}]{%
        \includegraphics[width=0.32\textwidth]{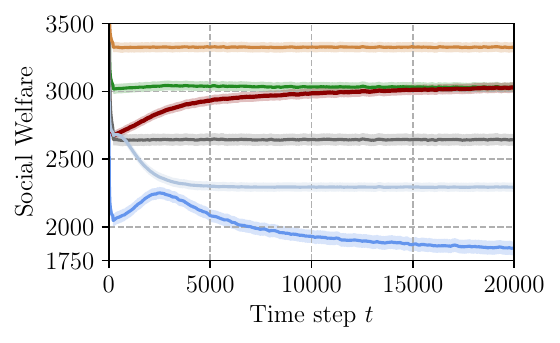}}
    \hfil
    \subfloat[Energy consumption analysis\label{fig:energyConsumption}]{%
        \includegraphics[width=0.32\textwidth]{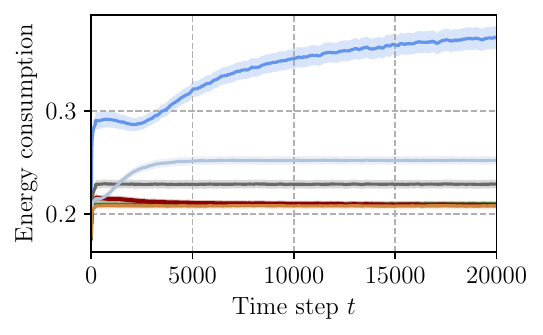}}
    \hfil
    \subfloat[Competition analysis\label{fig:collisionRatio}]{%
        \includegraphics[width=0.32\textwidth]{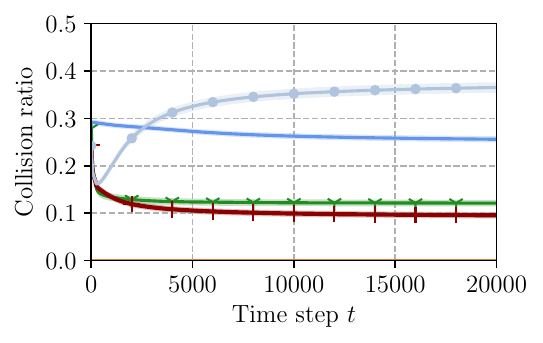}}
    \\[-1mm]

    \subfloat[MU scalability analysis\label{fig:scalability}]{%
        \includegraphics[width=0.24\textwidth]{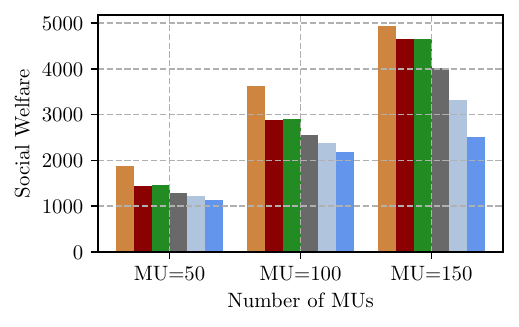}}
    \hfil
    \subfloat[Task scalability analysis\label{fig:scalability2}]{%
        \includegraphics[width=0.24\textwidth]{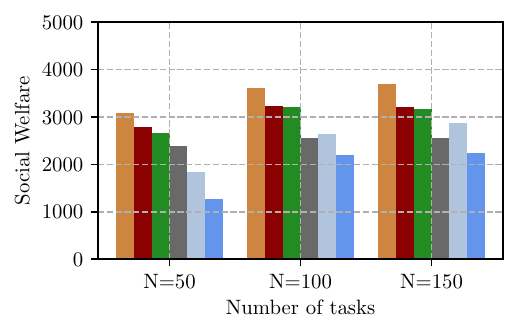}}
    \hfil
    \subfloat[Sensitivity: EH dynamics\label{fig:eh_sensitivity}]{%
        \includegraphics[width=0.24\textwidth]{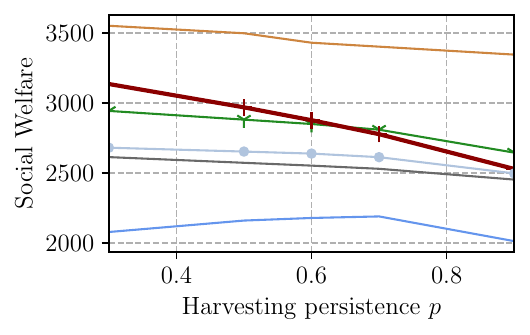}}
    \hfil
    \subfloat[Sensitivity: battery capacity\label{fig:battery_sensitivity}]{%
        \includegraphics[width=0.24\textwidth]{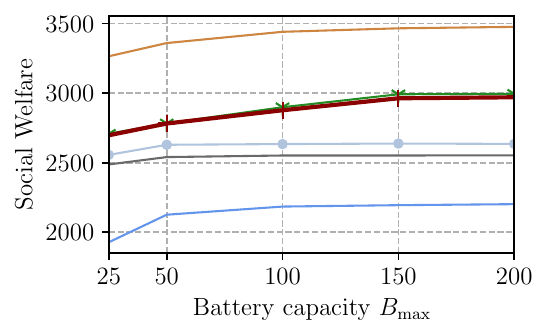}}

    \caption{Performance comparison of the ETSL under different metrics and analyses.}
    \label{fig:all_plots}
\end{figure*}

\begin{table}[t]
\centering
\caption{Simulation Parameters}
\label{tab:SimPara}
\begin{tabular}{l l}
    \hline
    Sensing data size \cite{huang2022timedependent} & $\mathcal{U}[50, 100]$ Mbits \\
    Processed result size \cite{huang2022timedependent} & $\mathcal{U}[10, 20]$ Mbits \\
    Sensing time $\tau^{\text{sense}}_{k,z}$ \cite{huang2022timedependent} & $\mathcal{U}[60, 180]$ ms \\
    Communication time $\tau^{\text{comm}}_{k,z}$ \cite{Decentralized_Bernd_2024} & $\mathcal{U}[0.125, 0.5]$ s \\
    Local CPU frequency $f^{\text{local}} _{k}$ \cite{Mahn2021globalOrchestration} & $\mathcal{U}[1, 2]$ GHz \\
    Task processing complexity $c_z$ \cite{huang2022timedependent} & $\mathcal{U}[200, 300]$ \\
    MU battery capacity $B_\mathrm{max}$ \cite{Dongare_AoII_2024} & $100$ units \\
    Amount of energy harvested $E^{\text{h}}_{k,t}$ \cite{Dongare_AoII_2024} & $\mathcal{U}[0, 15]$ units \\
    \hline 
\end{tabular}
\end{table}
For the evaluation, we consider $300$ independent Monte Carlo iterations, each with $T=20000$ time steps.
In the baseline simulation, the number of available MUs is set to $K=100$, and the number of available task types $Z=10$.
The number of available tasks per time step varies between $[50, 100]$.
EH is modeled as a time correlated Markov chain with a harvesting ($\chi$) and a non-harvesting ($\Bar{\chi}$) state.
The transition probabilities are $P(\chi|\chi)=P(\Bar{\chi}|\Bar{\chi})=0.6$ and $P(\chi|\Bar{\chi})=P(\Bar{\chi}|\chi)=0.4$~\cite{Dongare_AoII_2024}.
For the different analyses such as MU and task scalability as well as the sensitivity to EH dynamics and the battery capacities, the respective parameters are mentioned exclusively.
Therefore, unless specified, the baseline scenario parameters are used in the analysis.

The learning rate is $\alpha=0.1$, the discount factor is $\gamma=0.9$, and the $\epsilon$-greedy exploration parameters are $\epsilon_{\max}=1$, $\epsilon_{\min}=0$, and $\theta=0.999$. The number of state levels is set to $L=21$.
All learning benchmark algorithms use the learning parameters which are determined through independent hyperparameter optimization.
The rest of the parameters along with their references are summarized in Table \ref{tab:SimPara}.

For comparison, centralized and decentralized state-of-the-art benchmarks are used. The centralized approaches assume complete information $\mathcal{I}$. Although such assumption makes them impossible to implement in real applications, they serve as theoretical baselines.\\
\textbf{Value Iteration (VI)}: This centralized dynamic programming-based approach computes the value of being in each battery state for each MU and identifies the optimal task proposal and task assignment strategies for \glspl{mu} and \gls{mcsp} to maximize their utilities using $\mathcal{I}$.\\
\textbf{Gale-Shapley (GS)~\cite{Gu2015matching}}: A centralized approach that provides a myopically stable solution to the dynamic task proposal and task assignment game by maximizing the individual utilities of the \glspl{mu} and the \gls{mcsp} in time step $t$ by using $\mathcal{I}$.

For the decentralized benchmark algorithms, the \gls{mcsp} uses the TSLTA algorithm for task assignment~\cite{Dongare_TSL_2024_ICC}.
At the \glspl{mu}' side, we use:\\
\textbf{TSLTP~\cite{Dongare_TSL_2024_ICC}}: Each \gls{mu} adopts decentralized gradient-based multi-armed bandit algorithm to learn its task proposal strategy. TSLTP does not account for \gls{mu}'s energy constraints.
Note that TSLTP and TSLTA together constitute the TSL.\\
\textbf{Deep-Q Network (DQN)}: A deep reinforcement learning baseline that operates on the same state, action, and reward spaces as the proposed ETP. DQN uses a neural network function approximator to estimate action-value functions.\\
\textbf{Epsilon-Greedy (EG)}: A lightweight baseline where each \gls{mu} learns task utilities using a multi-armed bandit algorithm with a decaying $\epsilon$-greedy exploration strategy and proposes tasks that maximize immediate expected profit.

We use achieved social welfare as a comparison metric to evaluate system-level performance of all the benchmarks as shown in Fig.\ref{fig:socialWelfare}.
The social welfare is the sum of the utilities of the MUs and the MCSP.
VI and GS utilize the complete information $\mathcal{I}$ to obtain matching solutions.
VI provides an upper bound by considering the impact of spending the energy resources on the future using dynamic programming, whereas, GS finds a myopically stable matching solution ignoring the consequences on the battery.
Our ETSL algorithm performs only $7.6\%$ lower than VI and outperforms GS by $13\%$.
This is because the ETSL helps the MUs to consider the future consequences of their task proposal decisions on their battery state, while obtaining an efficient task acceptance strategy for the MCSP based on the proposals.
By exploring the task types, the ETSL improves the estimate on task efforts and expected task rewards.
ETSL learns the trends in the \gls{mu} competition over time.
The results show that the EG and the TSL achieve $52.5\%$ and $64.9\%$ lower social welfare than our proposed approach, respectively.
Moreover, the performance of our proposed ETSL converges to that of the DQN because the considered state-action space of every MU is low-dimensional and discretized, for which tabular Q-learning provides sample-efficient and stable learning.
For a more complex scenario, DQN may exhibit better performance albeit at the cost of higher computational complexity.
The EG ignores the collisions while learning the task proposal strategy and thus results in a poor performance.
TSL ignores the energy consequences of the \glspl{mu} and thus performs well as long as energy is available but then deteriorates over time.

Fig. \ref{fig:energyConsumption} shows the total energy consumed per number of completed tasks.
Our proposed ETSL performs more tasks by smartly selecting tasks that offer higher utility while also allowing the MUs to stay idle when the available energy is scarce.
This is directly reflected in the results where the ETSL consumes on average at least $45.9\%$, $42.9\%$, $16.7\%$, and $9.1\%$ less energy than the EG, TSL, GS, and DQN algorithms respectively and converges to VI's performance.

In Fig. \ref{fig:collisionRatio}, we compare the number of collisions per total number of task proposals, also known as the collision ratio.
A lower collision ratio indicates that the \glspl{mu} have learned about the competition and are able to make task proposal decisions efficiently.
Collisions degrade the performance of the system as resources are wasted during a collision.
Note that GS and VI do not have any collisions as they exploit the complete information $\mathcal{I}$.
Our proposed ETSL algorithm achieves at least $73.7\%$, $54.5\%$, and $9.1\%$ lower collisions compared to the EG, TSL, and DQN.
This is because our ETSL constantly updates the task type preferences based on the current battery state, current competition, and the learned acceptance mechanism of the MCSP.
This helps the MUs to defer from proposing to some task types from which they are frequently rejected.
Evidently, ETSL is able to reduce the collisions over time and moves towards more stable allocations which enhances the overall performance and reduces resource wastage.
The $5$th–$95$th percentile confidence intervals in the plots above demonstrate the stability of ETSL's performance.

In Fig.~\ref{fig:scalability} and~\ref{fig:scalability2}, we analyze how the benchmark algorithms scale with increasing number of \glspl{mu} and increasing number of tasks, respectively.
We observe that the ETSL exhibits better scalability characteristics by achieving only $7.8\%$ lower social welfare than VI.
Moreover, ETSL converges to DQN in all cases, depicting its superior performance with the advantage of lower complexity than the DQN.
The other benchmark algorithms struggle to perform with higher number of \glspl{mu} which increases the competition.

Finally in Fig.~\ref{fig:eh_sensitivity}and Fig.~\ref{fig:battery_sensitivity}, we perform the sensitivity analysis for the EH process dynamics and the battery capacity of the \glspl{mu}.
In Fig.\ref{fig:eh_sensitivity}, we consider the probability of persistence, i.e., $P(\chi|\chi)=P(\Bar{\chi}|\Bar{\chi})$ between the range $0.3$ to $0.9$ in the steps of $0.1$.
As the persistence increases, the resulting Markov chain stays in the EH or non-EH state for a long time.
This reduced the resulting energy harvesting probability and thus results in a lower performance.
Here, the performance of the ETSL is consistent with the VI.

Similarly, in Fig.\ref{fig:battery_sensitivity}, we vary the battery capacity of the \glspl{mu} between $[25, 200]$.
As the battery capacity increases, the \glspl{mu} are able to store more energy and sustain longer in the environment for task proposals and executions.
Here, the proposed ETSL algorithm shows promising performance across the battery capacities which demonstrates the superior performance of the proposed ETSL algorithm.

\section{Conclusion}
\label{sec:conclusion}
This work addressed the joint task proposal and task assignment problem in \gls{mcs}-based SaaS for NGNs by formulating it as a dynamic two-sided matching game under incomplete information, where unknown sensing quality, task effort, and the energy limitations of battery-powered \glspl{mu} affect participant availability and sustainable sensing-resource orchestration. To tackle this challenge, we proposed ETSL, a fully decentralized and low-complexity learning framework that enables \glspl{mu} to learn energy-aware task proposal strategies while allowing the \gls{mcsp} to learn \gls{mu} sensing quality for efficient task assignment.
By explicitly incorporating energy constraints into the learning and matching process, ETSL supports sustainable system operation while accounting for the individual preferences and utilities of both \glspl{mu} and the \gls{mcsp}. Moreover, we use an alternating-freeze approach to show that the proposed ETSL solution converges to a stable matching solution. Simulation results confirm that ETSL outperforms state-of-the-art benchmark algorithms in terms of social welfare, energy efficiency, and collisions, while demonstrating its scalability and suitability for Sensing-as-a-Service in future NGNs.

\bibliographystyle{IEEEtran}
\bibliography{./bibliography/IEEEabrv,./bibliography/references}
\end{document}